\newtheorem{theorem}{Theorem}
\DeclarePairedDelimiter\ket{\lvert}{\rangle}
\DeclarePairedDelimiterX\braket[2]{\langle}{\rangle}{#1 \delimsize\vert #2}
\definecolor{umassMaroon}{rgb}{0.588, 0.004, 0.090} 
\definecolor{edgeColor}{rgb}{0, 0, 1}
\definecolor{localCopyColor}{rgb}{0, 0.5, 0}
\tikzset{
	nodeStyle/.style={draw=red, line width=1pt}
}
\tikzset{
	edgeStyle/.style={draw=edgeColor, line width=1pt}
}
\tikzset{
	localCopyStyle/.style={draw=localCopyColor, line width=1pt}
}
\tikzset{every picture/.style={inner sep=1.5pt}} 
\title{Distributing Graph States Across Quantum Networks}
\author{
    \IEEEauthorblockN{Alex Fischer}\IEEEauthorblockA{College of Information and Computer Sciences\\University of Massachusetts, Amherst\\{\tt alexander.fischer3@gmail.com}}\and\IEEEauthorblockN{Don Towsley}\IEEEauthorblockA{College of Information and Computer Sciences\\University of Massachusetts, Amherst\\{\tt towsley@cs.umass.edu}}
}
\date{}
\begin{document}
\maketitle

\begin{abstract}
	Graph states form an important class of multipartite entangled quantum states. We propose a new approach for distributing graph states across a quantum network. We consider a quantum network consisting of nodes---quantum computers within which local operations are free---and EPR pairs shared between nodes that can continually be generated. We prove upper bounds for our approach on the number of EPR pairs consumed, completion time, and amount of classical communication required, all of which are equal to or better than that of prior work\cite{distrStates}. We also reduce the problem of minimizing the completion time to distribute a graph state using our approach to a network flow problem having polynomial time complexity.
\end{abstract}

\section{Introduction}

\subsection{Motivation}

Graph states form an important class of multipartite entangled states. They are interesting both theoretically, for their importance in one-way and measurement-based quantum computing\cite{measurementCalc}\cite{PhysRevLett.86.5188}, and practically, for their applications such as to quantum metrology\cite{metrology} and secure multi-party computation\cite{entanglementApplications}.

The role of graph states in measurement-based quantum computation makes them especially interesting as a resource to distribute across a quantum network. A classic result states that any quantum computation can be done in a ``one-way'' fashion\cite{PhysRevLett.86.5188} by preparing a graph state among a set of qubits, then performing measurements and single-qubit operations based on the measurement results. Preparing such graph states among qubits in different network nodes allows the network to perform these one-way computations in a distributed manner, which can be especially useful if different network nodes receive different parts of the input to some quantum computation. This establishes distribution of graph states across a quantum network as an important service for it to provide.
\subsection{Prior Work}

There has been considerable work on the construction of graph states at a single node~\cite{campbell2007distributed}, \cite{matsuzaki2010probabilistic} and in the context of photonic cluster computing~\cite{MsmntQCBriegel09}. Much of the work on generation and distribution of graph states across a quantum network has focused on providing robustness and resilience to noise in the channels between network nodes, and memories and gates in the nodes. For example Cuquet and Calsamiglia\cite{noisyDistr} consider a similar graph state distribution protocol to ours. However, they focus on a network with a star topology as opposed to a network having an arbitrary topology as in our work. They optimize for both fidelity and fidelity decay rate given the constraint of noisy channels, instead of optimizing for EPR pair consumption and time required to distribute the graph state given the network topology. 

Pirker and D\"ur~\cite{modularArchitectures} \cite{networkStackDistr} consider graph state distribution protocols for more general network topologies like our work.  Their focus is on their placement in a larger network protocol stack and how it can be modified to work within an unreliable network rather than their performance in terms of resource requirements and time to complete entanglement distribution. 

The work of Meignant et al.~\cite{distrStates} is closest to ours in spirit. They proposed an algorithm for constructing an {\em edge-decorated complete graph (EDCG)} across a network, which is then transformed into a desired graph state. They then derived upper bounds on the number of EPR pairs consumed and on time to complete the construction, under the assumption that channels, memories, and logic is perfect, and that Bell state measurements are deterministic. We conduct a similar analysis of a different algorithm to generate and distribute graph states across a network. Our approach consists of constructing the graph state at one node and transporting the qubits to the appropriate nodes within the network. Henceforth, we refer to the algorithm proposed in~\cite{distrStates} as the EDCG algorithm.

\subsection{Overview of Results}

In this work, we study the following {\em Graph State Transfer (GST)} algorithm for distributing graph states across a quantum network. Suppose a set of network nodes desires to share a specific graph state, with one qubit from the graph state in each network node. The idea behind our algorithm is to first create a local copy of the desired graph state at one node of the network and then distribute the graph state to the relevant set of nodes.  This can be thought of as an extension of the bipartite A protocol from \cite{noisyDistr} to the general network setting. Besides introducing this generalization we make the following contributions: 
\begin{itemize}
    \item We analyze the EPR pair consumption of our GST algorithm through the derivation of an upper bound as a function of the quantum network size.  We also show that the GST algorithm never consumes more EPR pairs than the EDCG algorithm. For some networks, such as those with a binary tree topology, the difference in the numbers of EPR pairs consumed can be significant.
    \item We derive an upper bound on the time needed to distribute the graph state (henceforth referred to as {\em completion time}). We present a polynomial time algorithm that chooses paths in a network that minimizes that completion time for the GST algorithm.  We also show that the completion time of the GST algorithm is never more than that of the EDCG algorithm.
    \item We analyze the quantum memory and classical communication requirements for the GST algorithm.  The memory requirements are shown to be considerably less for the GST algorithm and the classical communication overheads are comparable.
\end{itemize}

Table \ref{comparisonTable} details these comparisons.

Our approach to distributing graph states naturally leads to a new \emph{resource graph state}: a graph state that can be distributed among a set of network nodes ahead of time that allows instantaneous distribution of any other graph state among those nodes by consuming the resource graph state, once that other graph state is known. This is useful if one knows the set of nodes that will request to share a graph in the future, but one does not yet know the exact graph state that will be requested. Our resource graph state requires maintaining fewer qubits than that of prior work.
\section{Background}

\subsection{Graph States}

A graph state\cite{multipartyEntanglement} is a type of multiple-qubit state that is useful for certain quantum computing operations between multiple parties. We represent a graph state as a graph $G=(V, E)$ where the vertices correspond to qubits. The graph state for $G$ is initiated with all qubits in the $\ket{+}$ state followed by the application of controlled $Z$ operations to all pairs of qubits corresponding to pairs of vertices in $E$. More precisely, the graph state corresponding to $G$ is
\begin{align*}
\ket{\psi_G}=\left(\prod_{(u, v)\in E}CZ_{u, v}\right)\ket{+}^{|V|}.
\end{align*}
Note that $CZ$ operations commute, so we can apply the $CZ$ operations in any order we want (or all at once).

The graph state class of multiparty entangled states is useful because, among other reasons, there is a set of quantum operations that affect the state (up to local correction operations) graphically---ie, we can think about simple, familiar graph operations instead of quantum operators and measurements. We use the following quantum operations (and their corresponding graphical operations) in this paper:
\begin{itemize}
	\item \textbf{Local complementation} of a vertex $a\in V$ replaces the subgraph corresponding to the neighbors of $a$ with its complement. This operation requires $O(|N_a|)$ bits of classical communication ($O(1)$ communication between $a$ and each of its neighbors), where $N_a$ is the set of neighbors of $a$. The quantum operations required to perform this graphical operation are given in \cite{multipartyEntanglement}.
	\item \textbf{Edge addition/deletion} of an edge $(u, v)$ creates an edge if one does not exist, or deletes it if it does. It corresponds to the $CZ_{u,v}$ operation.
	\item \textbf{$Z$-measurement} of a vertex $a$ deletes $a$ and all of its incident edges.
	\item \textbf{$Y$-measurement} of a vertex $a$ has the effect of deleting vertex $a$ and all of its incident edges, and locally complementing its neighbors. This operation requires $O(|N_a|)$ bits of classical communication: $O(1)$ communication between $a$ and each of its neighbors.
\end{itemize}
A useful property of the edge addition/deletion and $Y$-measurement operations (along with the local correction operations implicit to $Y$-measurement) is that any sequence of edge addition/deletion operations and $Y$-measurement operations can be rewritten into an equivalent sequence of operations such that the edge additions/deletions occur first and all measurements come next. All local correction operations (one per qubit) can be executed concurrently\cite{measurementCalc} at the end. This allows us to perform a sequence of $O(n)$ edge creation and $Y$-measurement operations in $O(1)$ time.

\subsection{Quantum Networks}

A quantum network is a set of nodes and edges $(V', E')$. Nodes correspond to routers and repeaters; they are computers with unlimited numbers of qubits, the capability to perform local operations and communicate within their neighborhood in order to effect long distance entanglement. An edge represents a pair of nodes connected by a quantum channel that can generate EPR pairs between them, and can regenerate EPR pairs as necessary.  The state of a quantum network at any point in time is a graph state among all the qubits in all the nodes of the network. We give an example quantum network in Figure \ref{exNetwork}.

A natural task is to distribute a graph state across a quantum network to a set of nodes. This means we alter the network state such that each qubit in a graph state exists in a specific node. Rather than preparing a graph state from some other graph state among a specified set of qubits via graphical operations, which is not always possible (in fact, it is NP-complete to determine whether transforming one graph state into another is possible\cite{npCompleteTransformation}), we only require that each qubit in the graph state be part of a specified node. To achieve this, we can use local operations within nodes (which are free in our model), and link-level EPR pair regeneration. EPR pair regeneration, an operation on qubits in different nodes, is expensive---EPR pair consumption is one of the performance metrics of a quantum networking algorithm in this model.

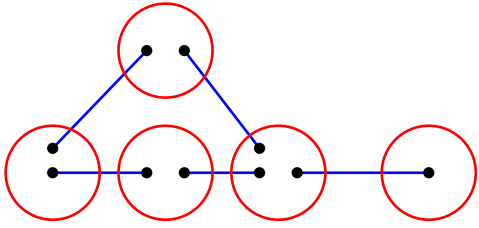
\begin{figure}
	\begin{center}
		\begin{tikzpicture}[scale=0.5]
		\coordinate (A) at (2.5 , 2.75);
		\coordinate (B) at (0, 2.75);
		\coordinate (C) at (3.5, 2.75);
		\coordinate (D) at (5.5, 2.75);
		\coordinate (E) at (0, 3.4);
		\coordinate (F) at (2.5, 6 );
		\coordinate (G) at (5.5, 3.4);
		\coordinate (H) at (3.5, 6);
		\coordinate (I) at (6.5, 2.75);
		\coordinate (J) at (10, 2.75);

		\draw[edgeStyle] (A) -- (B);
		\draw[edgeStyle] (C) -- (D);
		\draw[edgeStyle] (E) -- (F);
		\draw[edgeStyle] (G) -- (H);
		\draw[edgeStyle] (I) -- (J);
		
		\draw[nodeStyle] (3, 2.75) circle [radius=1.25];
		\draw[nodeStyle] (3, 6) circle [radius=1.25];
		\draw[nodeStyle] (0, 2.75) circle [radius=1.25];
		\draw[nodeStyle] (6, 2.75) circle [radius=1.25];
		\draw[nodeStyle] (J) circle [radius=1.25];
		
		\foreach \n in {A,B,C,D,E,F,G,H,I,J}
		\node at (\n)[circle,fill]{};
		\end{tikzpicture}
		\caption{An example quantum network. Red circles represent nodes; blue edges represent connections between nodes, which can be regenerated after being consumed by quantum operations within nodes.}
		\label{exNetwork}
	\end{center}
\end{figure}

For the problem of distributing an arbitrary graph state among a network with $n$ nodes, Meignant et al.\cite{distrStates} give an algorithm that consumes at most $\frac{n(n-1)}{2}$ EPR pairs and has a completion time of at most $n-1$ timesteps. They also propose a ``resource graph state'' (see Figure \ref{edgeDec}) that can be distributed among a network ahead of time in order to enable instantaneous distribution of an arbitrary graph state. Their resource graph state requires $\frac{n(n-1)}{2}$ qubits. We present a graph state distribution algorithm that uses at most $\frac{n(3n/2-1)}{4}$ EPR pairs. This algorithm naturally leads to an alternate resource graph state that requires $2(n-1)$ qubits.

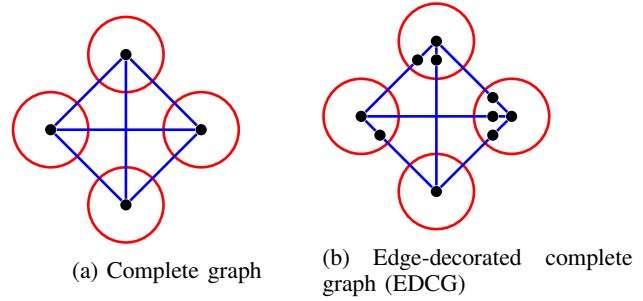
\begin{figure}
		\begin{subfigure}{.25\textwidth}
			\begin{tikzpicture}
			\node (n1) at (1, 0) [circle, fill] {};
			\node (n2) at (0, 1) [circle, fill] {};
			\node (n3) at (-1, 0) [circle, fill] {};
			\node (n4) at (0, -1) [circle, fill] {};
			
			\draw[nodeStyle] (n1) circle [radius=0.5];
			\draw[nodeStyle] (n2) circle [radius=0.5];
			\draw[nodeStyle] (n3) circle [radius=0.5];
			\draw[nodeStyle] (n4) circle [radius=0.5];
			
			\draw[edgeStyle] (n1) -- (n2);
			\draw[edgeStyle] (n1) -- (n3);
			\draw[edgeStyle] (n1) -- (n4);
			\draw[edgeStyle] (n2) -- (n3);
			\draw[edgeStyle] (n2) -- (n4);
			\draw[edgeStyle] (n3) -- (n4);
			\end{tikzpicture}
			
			\caption{Complete graph}
		\end{subfigure}\begin{subfigure}{.25\textwidth}
			\begin{tikzpicture}
			\node (v1) at (1, 0) [circle, fill] {};
			\node (v2) at (0, 1) [circle, fill] {};
			\node (v3) at (-1, 0) [circle, fill] {};
			\node (v4) at (0, -1) [circle, fill] {};
			
			\draw[nodeStyle] (n1) circle [radius=0.5];
			\draw[nodeStyle] (n2) circle [radius=0.5];
			\draw[nodeStyle] (n3) circle [radius=0.5];
			\draw[nodeStyle] (n4) circle [radius=0.5];
			
			\node (e12) at (.75, .25) [circle, fill] {};
			\node (e13) at (.75, 0) [circle, fill] {};
			\node (e14) at (.75, -.25) [circle, fill] {};
			\node (e23) at (-.25, .75) [circle, fill] {};
			\node (e24) at (0, .75) [circle, fill] {};
			\node (e34) at (-.75, -.25) [circle, fill] {};
			
			\draw[edgeStyle] (v1) -- (e12); \draw[edgeStyle] (e12) -- (v2);
			\draw[edgeStyle] (v1) -- (e13); \draw[edgeStyle] (e13) -- (v3);
			\draw[edgeStyle] (v1) -- (e14); \draw[edgeStyle] (e14) -- (v4);
			\draw[edgeStyle] (v2) -- (e23); \draw[edgeStyle] (e23) -- (v3);
			\draw[edgeStyle] (v2) -- (e24); \draw[edgeStyle] (e24) -- (v4);
			\draw[edgeStyle] (v3) -- (e34); \draw[edgeStyle] (e34) -- (v4);
			\end{tikzpicture}
			
			\caption{Edge-decorated complete graph (EDCG)}
		\end{subfigure}
	\caption{A 4 node example of a resource graph state. (a) A 4-node network such that the network graph state is the complete graph among 4 qubits, each qubit in a different node. (b) A 4-node network such that the network graph state is the {\em edge-decorated complete graph}: a complete graph with additional vertices added to split each edge into two. The additional vertices added can exist in either of the nodes of the edge which that vertex split in two. $Z$ or $Y$ measuring a decoration vertex deletes or preserves the original edge, respectively. By $Z$ or $Y$ measuring each decoration vertex (along with associated local correction operations), any 4-qubit graph state can be prepared among the 4 nodes.}
	\label{edgeDec}
\end{figure}

\section{Connection Transfer} \label{LcDistr}

We start with a simple sequence of operations we refer to as {\em connection transfer}. This operation starts with a qubit $a$ at a node $A\in V'$ that is connected to other qubits, which are possibly outside $A$. $A$ also includes a second qubit $b$ that is entangled with a third qubit $c$ residing at another node. Connection transfer changes the network graph state such that the edges between qubit $a$ and its neighborhood are connected to qubit $c$ instead of $a$. See Figure \ref{connTransSetupEnd} for the setup and end result of connection transfer. We present two approaches to connection transfer: via graphical operations, and via teleportation.

\begin{figure}
	\centering
	\begin{subfigure}{.24\textwidth}
		\centering
		\begin{tikzpicture}[scale=0.5]
		\coordinate (c) at (1, 0);
		\coordinate (b) at (3.8, 0);\textit{}
		\coordinate (a) at (4.6, 0);
		\coordinate (n1) at (6.5, 0);
		\coordinate (n2) at (6.5, -1);
		\coordinate (n3) at (6.5, 1);
		
		\draw[nodeStyle] (c) circle [radius=1.1];
		\draw[nodeStyle] (4.2, 0) circle [radius=1.1];
		
		\foreach \n in {c, b, a}
		\node at (\n)[circle,fill,label=\n]{};
		
		\draw[edgeStyle] (c) -- (b);
		\draw[edgeStyle] (a) -- (n1);
		\draw[edgeStyle] (a) -- (n2);
		\draw[edgeStyle] (a) -- (n3);
		\end{tikzpicture}
		\caption{Setup.}
		\label{connTransSetupEndSetup}
	\end{subfigure}
	\begin{subfigure}{.24\textwidth}
		\begin{tikzpicture}[scale=0.5]
		
		\draw[nodeStyle] (c) circle [radius=1.1];
		\draw[nodeStyle] (4, 0) circle [radius=1.1];
		
		\foreach \n in {c}
		\node at (\n)[circle,fill,label=\n]{};
		
		\draw[edgeStyle] (c) -- (n1);
		\draw[edgeStyle] (c) -- (n2);
		\draw[edgeStyle] (c) -- (n3);
		\end{tikzpicture}
		\caption{End result.}
		\label{connTransSetupEndEnd}
	\end{subfigure}
	\caption{The setup and end result of the connection transfer process. We transfer the edges connected to $a$ to qubit $c$, by consuming the EPR pair between $b$ and $c$.}
	\label{connTransSetupEnd}
\end{figure}
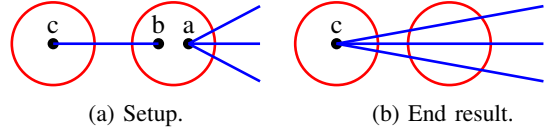

Figure \ref{connTransGraph} details connection transfer via graphical operations. First we create an edge between $a$ and $b$ with a local $CZ$ operation. Then we $Y$-measure both $a$ and $b$. The successive $Y$-measurements locally complement $a$'s neighborhood twice, but the second such local complementation undoes the first, making the net effect of the two $Y$-measurements to transfer $a$'s connections to $c$. This process consumes one (non-local) EPR pair.

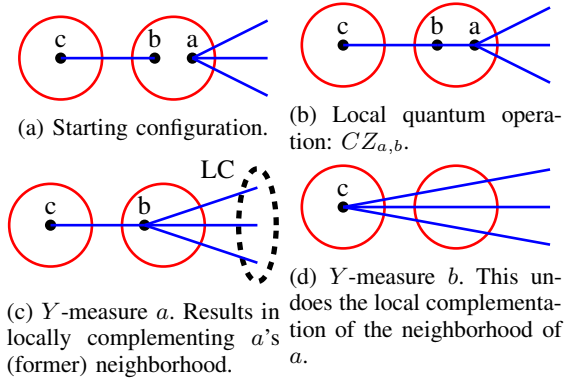
\begin{figure}
	\centering
	\begin{subfigure}{.22\textwidth}
		\centering
		\begin{tikzpicture}[scale=0.5]
		\coordinate (c) at (1, 0);
		\coordinate (b) at (3.5, 0);\textit{}
		\coordinate (a) at (4.5, 0);
		\coordinate (n1) at (6.5, 0);
		\coordinate (n2) at (6.5, -1);
		\coordinate (n3) at (6.5, 1);
		
		\draw[nodeStyle] (c) circle [radius=1.1];
		\draw[nodeStyle] (4, 0) circle [radius=1.1];
		
		\foreach \n in {c, b, a}
		\node at (\n)[circle,fill,label=\n]{};
		
		\draw[edgeStyle] (c) -- (b);
		\draw[edgeStyle] (a) -- (n1);
		\draw[edgeStyle] (a) -- (n2);
		\draw[edgeStyle] (a) -- (n3);
		\end{tikzpicture}
		\caption{Starting configuration.}
		\label{connTransStart}
	\end{subfigure}
	\begin{subfigure}{.22\textwidth}
		\centering
		\begin{tikzpicture}[scale=0.5]
		
		\draw[nodeStyle] (c) circle [radius=1.1];
		\draw[nodeStyle] (4, 0) circle [radius=1.1];
		
		\foreach \n in {c, b, a}
		\node at (\n)[circle,fill,label=\n]{};
		
		\draw[edgeStyle] (c) -- (b);
		\draw[edgeStyle] (a) -- (b);
		\draw[edgeStyle] (a) -- (n1);
		\draw[edgeStyle] (a) -- (n2);
		\draw[edgeStyle] (a) -- (n3);
		\end{tikzpicture}
		\caption{Local quantum operation: $CZ_{a,b}$.}
	\end{subfigure}
	\begin{subfigure}{.22\textwidth}
		\centering
		\begin{tikzpicture}[scale=0.5]
		
		\draw[nodeStyle] (c) circle [radius=1.1];
		\draw[nodeStyle] (4, 0) circle [radius=1.1];
		
		\draw [dashed, line width=2pt] (n1) ellipse (.5 and 1.5);
		\node at (n3)[label={[xshift=-15]LC}]{};
		
		\foreach \n in {c, b}
		\node at (\n)[circle,fill,label=\n]{};
		
		\draw[edgeStyle] (c) -- (b);
		\draw[edgeStyle] (b) -- (n1);
		\draw[edgeStyle] (b) -- (n2);
		\draw[edgeStyle] (b) -- (n3);
		\end{tikzpicture}
		\caption{$Y$-measure $a$. Results in locally complementing $a$'s (former) neighborhood.}
	\end{subfigure}
	\begin{subfigure}{.22\textwidth}
		\centering
		\begin{tikzpicture}[scale=0.5]
		
		\draw[nodeStyle] (c) circle [radius=1.1];
		\draw[nodeStyle] (4, 0) circle [radius=1.1];
		
		\foreach \n in {c}
		\node at (\n)[circle,fill,label=\n]{};
		
		\draw[edgeStyle] (c) -- (n1);
		\draw[edgeStyle] (c) -- (n2);
		\draw[edgeStyle] (c) -- (n3);
		\end{tikzpicture}
		\caption{$Y$-measure $b$. This undoes the local complementation of the neighborhood of $a$.}
		\label{connTransEnd}
	\end{subfigure}
	\caption{Connection transfer via graphical operations.}
	\label{connTransGraph}
\end{figure}

Connection transfer via teleportation is straightforward. Again, we start with a qubit $a$ whose edges we wish to transfer to a qubit $c$. Qubit $c$ is connected to a qubit $b$ located in the same node as $a$. This situation is depicted in Figure \ref{connTransSetupEnd}(a). The initial state is
\begin{align*}
\ket{\psi_G}
=&\frac{1}{\sqrt{2}}\big(\ket{+}_b\ket{0}_c+\ket{-}_b\ket{1}_c\big) \\
&\frac{1}{\sqrt{2}}\Big(\ket{0}_a+\ket{1}_a\prod_{v\in N_a}Z_v\Big) \\
&\Big(\prod_{(u, v)\in E''}CZ_{u,v}\Big)\ket{+}^{\otimes \left|V\setminus \{a, b, c\}\right|}
\end{align*}
where $E''$ is the edge set of the network's graph state except for those incident to $a$, and except for $(b, c)$. We break this expression down term by term. The $\ket{+}^{\otimes \left|V\setminus \{a, b, c\}\right|}$ term corresponds to all qubits except $a$, $b$, and $c$ prepared in the $\ket{+}$ state. The $\prod_{(u, v)\in E''}CZ_{u,v}$ operations create all the edges except for $(b,c)$ and those connected to $a$. The $\frac{1}{\sqrt{2}}\left(\ket{0}_a+\ket{1}_a\prod_{v\in N_a}Z_v\right)$ term creates the qubit $a$ and the edges between $a$ and the vertices in its neighborhood $N_a$. The $\frac{1}{\sqrt{2}}\left(\ket{+}_b\ket{0}_c+\ket{-}_b\ket{1}_c\right)$ term creates the qubits $b$ and $c$ and the edge between them.

It is easy to see that measuring qubits $a$ and $b$ in the basis
\begin{equation}
\Bigg\{
\begin{aligned}
&\frac{1}{\sqrt{2}}\left(\ket{0}_a\ket{+}_b\pm\ket{1}_a\ket{-}_b\right), \\ &\frac{1}{\sqrt{2}}\left(\ket{0}_a\ket{-}_b\pm\ket{1}_a\ket{+}_b\right)
\end{aligned}
\Bigg\}\label{teleBasis}
\end{equation}
results in the desired transfer of $a$'s connections to $c$. To see this, consider what happens when we obtain the measurement result $\ket{\phi}=\frac{1}{\sqrt{2}}\left(\ket{0}_a\ket{+}_b+\ket{1}_a\ket{-}_b\right)$:
\begin{align*}
\braket{\phi}{\psi_G}=&\frac{1}{2\sqrt{2}}\Big(\ket{0}_c+\ket{1}_c\prod_{v\in N_a}Z_v\Big) \\
&\Big(\prod_{(u, v)\in E'}CZ_{u,v}\Big)\ket{+}^{\otimes \left|V\setminus \{a, b, c\}\right|}.
\end{align*}
This is precisely the graph state depicted in Figure \ref{connTransSetupEnd}(b). If the measurement result is another Bell state besides $\ket{\phi}$ then an $X$ and/or $Z$ gate correction will also need to be applied to qubit $c$, as is done with conventional quantum teleportation of one qubit.

Note that the graphical and teleportation approaches to connection transfer are essentially equivalent---the local $CZ$ operation of the graphical approach effects a change of basis, allowing the 2 single-qubit $Y$-measurements to achieve the same effect as the multi-qubit measurement in the basis \eqref{teleBasis} used in teleportation. The graphical approach requirement that each node be able to perform local $CZ$ operations and $Y$-measurements is no stricter than the operation requirements of nodes considered in prior work\cite{distrStates}.

\section{Graph State Distribution}

We can transfer a qubit's connections to a node not connected to that qubit's node through a sequence of connection transfers along a path of edges in the network running from the starting node to the desired node. This suggests the following algorithm for generating a graph state. First, generate a local copy of the graph state at some node via local $CZ$ operations, which are free in our model. Next, transfer the connections (using either of the aforementioned connection transfer methods) of each qubit to its corresponding node. Figure \ref{beginningEnd} illustrates the starting state and end result of this algorithm for an example network and desired final graph state. We call this algorithm the Graph State Transfer algorithm, or the \textbf{GST algorithm}.

This requires the root node to maintain $|S|$ qubits (where $S$ is the set of network nodes that will share the final graph state) and prepare them in an entangled state via local $CZ$ operations. This may be a difficult requirement to meet for large networks; however, it is also required of the graph state distribution approach in \cite{distrStates}.

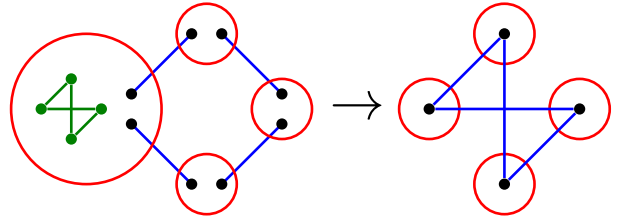
\begin{figure}
	\begin{center}
		\begin{tikzpicture}[scale=0.4]
		\coordinate (A) at (2, 0);
		\coordinate (B) at (0, 2);
		\coordinate (C) at (3, 0);
		\coordinate (D) at (5, 2);
		\coordinate (E) at (0, 3);
		\coordinate (F) at (2, 5);
		\coordinate (G) at (5, 3);
		\coordinate (H) at (3, 5);
		
		\draw[edgeStyle] (A) -- (B);
		\draw[edgeStyle] (C) -- (D);
		\draw[edgeStyle] (E) -- (F);
		\draw[edgeStyle] (G) -- (H);
		
		\draw[nodeStyle] (2.5, 0) circle [radius=1];
		\draw[nodeStyle] (2.5, 5) circle [radius=1];
		\draw[nodeStyle] (-1.5, 2.5) circle [radius=2.5];
		\draw[nodeStyle] (5, 2.5) circle [radius=1];
		
		\foreach \n in {A,B,C,D,E,F,G,H}
		\node at (\n)[circle,fill]{};
		
		\node (n1) at (-3, 2.5)[circle,fill,localCopyColor]{};
		\node (n2) at (-2, 3.5)[circle,fill,localCopyColor]{};
		\node (n3) at (-2, 1.5)[circle,fill,localCopyColor]{};
		\node (n4) at (-1, 2.5)[circle,fill,localCopyColor]{};
		
		\draw[localCopyStyle] (n1) -- (n2);
		\draw[localCopyStyle] (n2) -- (n3);
		\draw[localCopyStyle] (n3) -- (n4);
		\draw[localCopyStyle] (n1) -- (n4);
		
		\node at (7.5, 2.5) {\huge$\to$};
		\end{tikzpicture}
		\begin{tikzpicture}[scale=0.4]
		\node (n1) at (0, 2.5)[circle,fill]{};
		\draw[nodeStyle] (n1) circle [radius=1];
		\node (n2) at (2.5, 5)[circle,fill]{};
		\draw[nodeStyle] (n2) circle [radius=1];
		\node (n3) at (5, 2.5)[circle,fill]{};
		\draw[nodeStyle] (n3) circle [radius=1];
		\node (n4) at (2.5, 0)[circle,fill]{};
		\draw[nodeStyle] (n4) circle [radius=1];
		
		\draw[edgeStyle] (n1) -- (n2);
		\draw[edgeStyle] (n3) -- (n4);
		\draw[edgeStyle] (n1) -- (n3);
		\draw[edgeStyle] (n2) -- (n4);
		\end{tikzpicture}
	\end{center}
	\caption{Example setup and end result of our GST algorithm. A local copy of the final graph state (green) is prepared within a node and distributed throughout the network.}\label{beginningEnd}
\end{figure}

\subsection{Resource Graph State}

This approach to distributing graph states suggests a resource graph state---a graph state that can be distributed among a set of nodes ahead of time that allows any arbitrary graph state to be distributed among those nodes in one timestep. Resource graph states are useful if we know ahead of time that a set of nodes (or a superset of nodes) will request a graph state, but we do not know what that graph state will be. We choose one node in the network (called the ``root node'') and have the network graph state be such that the root node shares an entangled pair with every other node that will share the desired final graph state, as in Figure \ref{resourceGraph}. This allows us to generate an arbitrary graph state in one time step by generating the local copy at the root and distributing the graph state as usual, using either connection transfer method.

This resource graph state requires $2(n-1)$ qubits, where $n$ is the number of nodes that will share the graph state. This is an improvement over the $\frac{n(n-1)}{2}$ qubits needed for the EDCG (Figure \ref{edgeDec}), the resource graph state proposed in \cite{distrStates}. This improvement is significant because long-term maintenance of memory qubits is, and likely will continue to be, a challenging engineering problem.

\begin{figure}
	\begin{center}
		\begin{tikzpicture}[scale=1]
		\node (c0) at (-0.5, 0.5)[circle,fill]{};
		\node (c1) at (-0.25, 0.5)[circle,fill]{};
		\node (c2) at (0, 0.5)[circle,fill]{};
		\node (c4) at (0.5, 0.5)[circle,fill]{};
		
		\node (n0) at (-2, 2)[circle,fill]{};
		\node (n1) at (-1, 2)[circle,fill]{};
		\node (n2) at (0, 2)[circle,fill]{};
		\node at (1, 2){$\cdots$};
		\node (nn) at (2, 2)[circle,fill]{};
		
		\draw[edgeStyle] (n0) -- (c0);
		\draw[edgeStyle] (n1) -- (c1);
		\draw[edgeStyle] (n2) -- (c2);
		\draw[edgeStyle] (nn) -- (c4);
		
		\foreach \n in {n0, n1, n2, nn}
		\draw[nodeStyle] (\n) circle [radius=.3];
		
		\draw[nodeStyle] (0, .5) circle [radius=1];
		\node at (1.25, -0.25) {root};
		\node at (0, 2.25) [label={[align=center]other nodes that will\\ share the graph state}] {};
		\end{tikzpicture}
	\end{center}
	\caption{A resource graph state that requires $2(n-1)$ qubits, where $n$ is number of nodes that will share the final graph state.}
	\label{resourceGraph}
\end{figure}
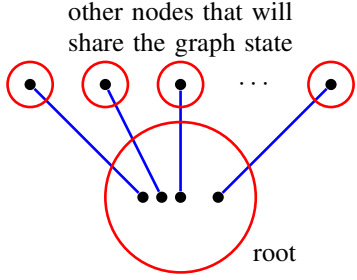

\section{EPR Pair Consumption}\label{eprSection}

Each connection transfer operation consumes one EPR pair. For each qubit in a graph state whose connections are transferred to a relevant node, those connection transfers consume a number of EPR pairs equal to the length of the path in the network from the root node the relevant node. Thus the total number of EPR pairs used to distribute a graph state across a network depends on the choice of paths from the root node to every other node in the network (and also implicitly depends on the choice of a root node). The number of EPR pairs consumed equals the sum of the lengths of such paths.

Upper bounding the number of EPR pairs consumed by this algorithm when distributing a graph state among a set of nodes $S$ is thus equivalent to upper bounding the sum of minimum path lengths from some root node to every node in $S$. We upper bound this sum by choosing the root node to our advantage. For any connected graph with $n$ vertices and any vertex $v$, at most $n-i$ vertices can be distance $i$ away from $v$. This means the sum of minimum path lengths from $v$ to all vertices in $S$, which we call $Ne(S)$, is at most
\begin{align*}
\begin{split}
N_e(S)\leq\mbox{}&(n-1)+(n-2)+\cdots+(n-|S|) \\
=\mbox{}&\frac{|S|(2n-|S|-1)}{2}.
\end{split}
\end{align*}
In particular, if $S=V'$ (ie. we are distributing a graph state across the entire network) then we use at most
\begin{align}
\begin{split}
N_e(V')\leq\mbox{}&(n-1)+(n-2)+\cdots+1\\
=\mbox{}& \frac{n(n-1)}{2}
\end{split}\label{dumbEprBound}
\end{align}
EPR pairs. Note that this upper bound is achieved when the network is a linear graph with the root at one end of the line.

Suppose we have the flexibility to select any vertex as the root. For any connected graph with $n$ vertices and maximum degree of at least two, basic graph theory tells us that there exists a vertex $v$ that is distance at most $\left\lceil\frac{n-1}{2}\right\rceil$ from any other vertex (see eg. \cite{graphRadiusBounds} Theorem 4.1). Thus by choosing a root in the center of the graph, we can replace every term in the above sum that is at least $\left\lceil\frac{n-1}{2}\right\rceil$ by $\left\lceil\frac{n-1}{2}\right\rceil$ to get a better upper bound. We compute this bound for even and odd $n$.

For even $n$, we replace every term in the sum from \eqref{dumbEprBound} that is at least $\frac{n}{2}$ by $\frac{n}{2}$ to get
\begin{align}
N_e(V')\leq\mbox{} &\frac{n}{2}\cdot\frac{n}{2}+\left(\frac{n}{2}-1\right)+\left(\frac{n}{2}-2\right)+\cdots+1 \nonumber \\
=\mbox{}&\frac{n^2}{4}+\frac{\frac{n}{2}\left(\frac{n}{2}-1\right)}{2} \nonumber \\
=\mbox{}&\frac{3n^2-2n}{8}.\label{evenEprBound}
\end{align}
For odd $n$, we replace every term in the sum from \eqref{dumbEprBound} that is at least $\frac{n-1}{2}$ by $\frac{n-1}{2}$ to get
\begin{alignat}{3}
 N_e(V')\leq\mbox{} & \frac{n+1}{2}\cdot\frac{n-1}{2}  +\left(\frac{n-1}{2}-1\right) \nonumber \\
 &  \quad\quad+\left(\frac{n-1}{2}-2\right)   +\cdots+1 \nonumber \\
 =\mbox{}& \frac{n^2-1}{4}+\frac{\frac{n-1}{2}\left(\frac{n-1}{2}-1\right)}{2} \nonumber \\
 =\mbox{}& \frac{3n^2-4n+1}{8}. \label{oddEprBound}
\end{alignat}
As $n> 0$, the even $n$ bound from \eqref{evenEprBound} is greater than the odd $n$ bound from \eqref{oddEprBound}, so in general $N_e(V')\leq(3n^2-2n)/8$.

This EPR pair consumption upper bound is lower than that for the EDCG algorithm, \cite{distrStates}, which uses up to $\frac{n(n-1)}{2}$ EPR pairs. However, we can also prove a stronger result: that we always use less than or an equal number of EPR pairs used by the EDCG algorithm.

The EDCG algorithm creates an edge-decorated complete graph (EDCG) between the set of nodes $S$ that will share the final graph state (see Figure \ref{edgeDec}). To create an EDCG among $S=\{s_1, s_2, \cdots, s_m\}$, the EDCG algorithm creates an $m$-qubit GHZ state between $\{s_1, \cdots, s_m\}$, then an $m-1$ qubit GHZ state between $\{s_2, \cdots, s_m\}$, then an $m-2$ qubit GHZ state between $\{s_3, \cdots, s_m\}$, etc, until creating a 2 qubit GHZ state (ie, just an EPR pair/edge in the network's graph state) between $s_{m-1}$ and $s_m$. These GHZ states are then combined via local operations at each node to form an EDCG. Then, each edge in the complete graph is either deleted or kept, by $Z$-measuring or $Y$-measuring the decoration vertex on that edge, respectively. See \cite{distrStates} Figures 7 and 8 for more detail.

\begin{theorem}\label{eprTheorem}
    The GST algorithm always uses less than or an equal number of EPR pairs used by the EDCG algorithm.
\end{theorem}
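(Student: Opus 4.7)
The plan is to couple a run of the GST algorithm to the ordering $s_1, s_2, \ldots, s_m$ that EDCG uses, to take $s_m$ as the GST root, and then to bound each shortest-path length in the GST cost by the number of EPR pairs spent on the corresponding GHZ state in EDCG. This works because $s_m$ sits inside every nested subset that EDCG builds a GHZ state on, so each GHZ preparation automatically provides a short route in the network between $s_m$ and the ``new'' participant $s_i$.

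First I would set up notation. For each $i \in \{1, \ldots, m-1\}$, let $c_i$ denote the number of EPR pairs EDCG consumes preparing its $i$-th GHZ state, which is on $\{s_i, s_{i+1}, \ldots, s_m\}$; the total EDCG cost is then $\sum_{i=1}^{m-1} c_i$. From Section~\ref{eprSection}, the GST cost with root $r$ and shortest-path routing is $\sum_{v \in S} d_G(r, v)$, where $d_G$ denotes the shortest-path distance in the network.

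Next I would argue that the $c_i$ EPR pairs consumed by EDCG for the $i$-th GHZ state correspond to $c_i$ edges of the network whose union forms a connected subgraph containing all of $\{s_i, \ldots, s_m\}$. Any GHZ preparation on a set $T$ must entangle all parties in $T$, and non-local entanglement can only propagate through the EPR pairs that are consumed, so those pairs must form a connected subgraph spanning $T$ (possibly with additional intermediate vertices). Because both $s_i$ and $s_m$ lie in this $c_i$-edge connected subgraph, $d_G(s_m, s_i) \le c_i$. Choosing $s_m$ as the GST root and summing over $i$,
\begin{equation*}
    \sum_{i=1}^{m-1} d_G(s_m, s_i) \;\le\; \sum_{i=1}^{m-1} c_i,
\end{equation*}
which is exactly the comparison the theorem asks for.

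The step I expect to be the main obstacle is pinning down the structural claim that the EPR pairs EDCG consumes during the $i$-th GHZ preparation really do form a connected subgraph spanning $\{s_i, \ldots, s_m\}$. Intuitively this is immediate from how entanglement distribution must work, but a fully rigorous justification should appeal either to the explicit GHZ construction described in~\cite{distrStates}, or to a generic lower bound stating that any GHZ state on a set $T$ of network nodes costs at least the number of edges in some connected subgraph spanning $T$. Once that is in hand, the rest of the argument is the one-line summation above, and is uniform in the network topology and the EDCG ordering.
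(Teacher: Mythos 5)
Your proposal is correct and takes essentially the same approach as the paper: pick $s_m$ as the GST root and bound each shortest-path distance $d_G(s_m, s_i)$ by the number of EPR pairs EDCG spends on the GHZ state over $\{s_i,\ldots,s_m\}$, since those consumed pairs must contain a network path between $s_m$ and $s_i$. The only difference is presentational---the paper asserts the path/connectivity claim in a single sentence, whereas you spell it out (and flag it) as a connected-subgraph argument.
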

\begin{proof}
Creating a GHZ state between $\{s_k,\cdots,s_m\}$ requires at least as many EPR pairs as performing connection transfer from $s_m$ to $s_k$, as the former will use EPR pairs along a path in the network between $s_k$ and $s_m$ (and possibly more EPR pairs). Thus, performing connection transfer $|S|-1$ times between $s_m$ and the other nodes in $S$ uses no more EPR pairs than generating all the GHZ states required by the EDCG algorithm. So by choosing $s_m$ as our root node, we use no more EPR pairs than the EDCG algorithm does to distribute a graph state among nodes in $S$.
\end{proof}

\subsection{An Example: Full Binary Tree}\label{eprExampleTree}

Here we provide an example where there is a large gap in the numbers of EPR pairs consumed by the GST algorithm and by the EDCG algorithm to distribute a graph state among every node in a full binary tree. Consider a full binary tree of height $h$ (by convention we consider the trivial tree with 1 vertex to have height 0); this graph has $n=2^{h+1}-1$ vertices. We choose the root of the tree as the root node of the network, and the paths we use to transfer connections to every other node of the network are obvious as there is only one choice of path for each node since the network structure is a tree. The number of EPR pairs consumed is the sum of path lengths from the root node to every other node:
\begin{equation*}
    \sum_{i=1}^h i2^i=2^{h+1}(h-1) + 2=\Theta(n\log n).
\end{equation*}

It is a straightforward calculation to show that the EDCG algorithm requires $\frac{n(n-1)}{2}$ EPR pairs to distribute a graph state to every node in this network with the EDCG algorithm. For this example, the EDCG algorithm consumes $\Theta (n/\log n)$ more EPR pairs. 

\section{Minimizing Completion Time}\label{timeOptSec}

\subsection{Parallelization}
First, we show that any sequence of connection transfers that does not require using any network edge more than once can be done simultaneously in one timestep.

For the graphical connection transfer approach, note that any sequence of connection transfers is a sequence of controlled-$Z$ operations, $Y$ measurements (at most one per qubit), and local correction operations required by the measurement results. We can rearrange those operations \cite{measurementCalc} (see Section 5.2 of \cite{measurementCalc} for details) into a different sequence of operations with the same effect such that the new sequence of operations consists first of controlled-$Z$ operations, followed by measurements, and then local correction operations. The controlled-$Z$ operations will be done on the same qubit pairs as the original sequence of operations, and the measurements and local correction operations will also be done on the same qubits as the original sequence.

This means any sequence of connection transfer operations that does not use any edge in the network more than once can equivalently be done as a sequence of (in order):
\begin{enumerate}
	\item Controlled-$Z$ operations. These can all be done at once as these operations commute.
	\item Measurements. These can all be done at once because the measurements are of different qubits.
	\item Local correction operations based on the measurement results.
\end{enumerate}

The teleportation approach to connection transfer, like the graphical approach, also allows connection transfers among distinct edges in the network to be parallelized. We can perform the necessary local correction operations ($X$ and/or $Z$ gates) on the final qubit in the connection transfer path based on the results of all the measurements in the connection transfer path. The exact correction operations on the final qubit are the correction operations that would have been done on the qubits in the path, done in reverse order of their appearance in the path.

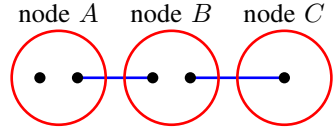
\begin{figure}
	\begin{center}
		\begin{tikzpicture}[scale=0.5]
		\coordinate (A) at (-0.5, 2.75);
		\coordinate (B) at (0.5, 2.75);
		\coordinate (C) at (2.5, 2.75);
		\coordinate (D) at (3.5, 2.75);
		\coordinate (E) at (6, 2.75);

		\draw[edgeStyle] (B) -- (C);
		\draw[edgeStyle] (D) -- (E);
		
		\draw[nodeStyle] (0, 2.75) circle [radius=1.25];
		\draw[nodeStyle] (3, 2.75) circle [radius=1.25];
		\draw[nodeStyle] (6, 2.75) circle [radius=1.25];
		
		\foreach \n in {A,B,C,D,E}
		\node at (\n)[circle,fill]{};
		
		\node at (0, 4) [label={[align=center]node $A$}] {};
		\node at (3, 4) [label={[align=center]node $B$}] {};
		\node at (6, 4) [label={[align=center]node $C$}] {};
		\end{tikzpicture}
		\caption{The setup for teleportation along two edges in the network. The pair of qubits in node $A$ and the pair of qubits in node $B$ will be measured in the basis in Equation \eqref{teleBasis}.}\label{twoTeleportations}
	\end{center}
\end{figure}

To illustrate this rearrangement of operations from multiple teleportations, we show exactly how it works for the case of two teleportations in a row; see Figure \ref{twoTeleportations} for the setup. When teleporting a state from node $A$ to node $B$ and then from node $B$ to node $C$, the usual sequence of operations is:
\begin{enumerate}
    \item Measure two qubits in node $A$ in the basis in Equation \eqref{teleBasis}.
    \item Based on the measurement result, perform a correction operation ($X$ and/or $Z$ gates) on a qubit in node $B$.
    \item Measure two qubits in node $B$ in the basis in Equation \eqref{teleBasis}.
    \item Based on the measurement result, perform a correction operation ($X$ and/or $Z$ gates) on a qubit in node $C$.
\end{enumerate}
Note that instead of correcting for the first measurement result in step 2, we can teleport the uncorrected state from node $B$ to node $C$ and then perform the correction operations that we would have done in step 2 on the qubit in node $C$. This results in the sequence of operations:
\begin{enumerate}
    \item Measure two qubits in node $A$ in the basis in Equation \eqref{teleBasis}.
    \item Measure two qubits in node $B$ in the basis in Equation \eqref{teleBasis}.
    \item Based on the second measurement result, perform a correction operation ($X$ and/or $Z$ gates) on the qubit in node $C$.
    \item Based on the first measurement result, perform a correction operation ($X$ and/or $Z$ gates) on the qubit in node $C$.
\end{enumerate}
We can do both measurement operations at once since they are performed on different qubits. The measurement results can then be reported to node $c$, followed by all local correction operations on node $c$.

This also easily generalizes to sequences of more than two teleportations; we just continue teleporting uncorrected states to the last qubit in the path and then do all correction operations at that last qubit. This means the only operations done at each node in a path of teleportation connection transfers (except the last node in the path) are the measurement operations, which can all be done at once because they are measurements on different qubits. Also, because the only local correction operations are performed at the end of any chain of connection transfers, the measurement results need only be communicated to the last node of any path.

Even though the final qubit in a chain of $n$ teleportations may require up to $2n$ correction operations, that sequence of operations will be equivalent to one of the 16 elements of the Pauli group on 1 qubit. So the up to $2n$ correction operations required can be accomplished in $O(1)$ time by applying the relevant element from the Pauli group.

We refer to the time it takes to perform simultaneous $CZ$ operations, measurements, local correction operations, and generate any EPR pairs as needed, as a timestep. Thus any sequence of connection transfer operations that does not use any edge in the network more than once, done via the graphical approach or teleportation approach, can be executed in one timestep. In general, distributing a graph state across an $n$ node network will take no more than $n-1$ timesteps. This is because the connection transfers on each path from the root node to another node can be executed in one timestep, and there are at most $n-1$ such paths---one per node that receives connections from the root node.

\subsection{Optimization via Path Selection}
We can often do better than $n-1$ timesteps. We can minimize the completion time by solving a network flow problem (see eg. \cite{clrs} chapter 26). Specifically, given a network graph, a root node, and a set of vertices $S$ of the network graph that will share the final graph state, we construct a network flow problem instance such that its maximum flow is $|S|$ iff there is a set of $|S|$ paths from the root to the vertices in $S$ such that no edge in the graph is used more than $k$ times (which allows us to distribute a graph state in $k$ timesteps). A binary search on $k$, as well as trying all possible root nodes, gives the optimal time to distribute a graph state among the nodes in $S$.

The construction is as follows. Start with the original network graph with each edge having weight $k$. Add a new vertex $t$. Finally, add edges from each vertex in $S$ to $t$ with weight 1 (see Figure \ref{networkFlow}). This network flow problem instance is related to completion time minimization by the following theorem.

\begin{theorem}
	In the network flow construction given in Figure \ref{networkFlow}, the max flow from the root node to $t$ is $|S|$ iff there exist $|S|$ paths in the network, each from the root to a different node in $S$, such that each edge is used at most $k$ times.
\end{theorem}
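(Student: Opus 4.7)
The plan is to prove both directions via the standard correspondence between integer max flows and collections of paths, exploiting the integrality theorem (the network has integer capacities, so a maximum integer flow exists) and the observation that the terminal edges $(s_i,t)$ all have capacity $1$.

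For the easier direction ($\Leftarrow$), I assume we are given $|S|$ paths in the network, each from the root to a distinct vertex $s_i\in S$, with every network edge used at most $k$ times. I would extend each path by appending the edge $(s_i,t)$ and push one unit of flow along the resulting extended path. Since each original edge is used at most $k$ times across all paths and has capacity $k$, and each terminal edge $(s_i,t)$ is used by exactly one path and has capacity $1$, capacity constraints are satisfied. The total flow entering $t$ is $|S|$, and since the $|S|$ edges into $t$ have total capacity $|S|$, this is the maximum possible flow.

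For the other direction ($\Rightarrow$), I assume the max flow from the root to $t$ equals $|S|$. By the integrality theorem for max flow on networks with integer capacities, there is an integer-valued max flow $f$. Every edge $(s_i,t)$ has capacity $1$, so $f$ assigns each such edge value $0$ or $1$; since the total flow into $t$ is $|S|$ and there are exactly $|S|$ such edges, each carries value $1$. Now I would apply flow decomposition to $f$: writing it as a nonnegative integer combination of path flows (plus possibly cycle flows, which I discard without decreasing the flow value into $t$), I obtain a multiset of integer-weighted source-to-$t$ paths whose weights into each $(s_i,t)$ sum to $1$. Hence there is exactly one path through each $(s_i,t)$; truncating it before $t$ yields a path from the root to $s_i$. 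The usage of any network edge across these $|S|$ paths is bounded by $f$'s value on that edge, which is at most its capacity $k$.

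The main obstacle, and the only subtle point, is the decomposition step: I need to rule out the possibility that the $|S|$ units of flow into $t$ get ``tangled'' through cycles or through multiple paths sharing the same terminal edge. Both concerns are handled by integrality plus the capacity-$1$ bottleneck on each $(s_i,t)$ edge, which forces the decomposition to contain exactly one path per vertex of $S$. If desired, a cleaner alternative would be to invoke Menger's theorem (in its capacitated form) directly on the augmented graph, but the flow-decomposition argument is more self-contained and fits the notation already used in the paper.
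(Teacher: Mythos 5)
Your proof is correct and takes essentially the same route as the paper's: the $\impliedby$ direction builds a flow of value $|S|$ directly from the given paths, and the $\implies$ direction applies flow decomposition to a maximum flow, using the unit-capacity edges into $t$ to force exactly one unit-value path per vertex of $S$. Your explicit invocation of the integrality theorem to justify that the decomposition yields integer (hence unit-value) path flows is a detail the paper leaves implicit, but it is the same argument.
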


\begin{proof}
The $\impliedby$ direction is obvious, as we can construct a flow of value $|S|$ by adding all of the $|S|$ paths, and setting the flow of the edges from $S$ to $t$ to be one.

For the $\implies$ direction, start with a maximum flow of value $|S|$ from the root node to $t$. The flow decomposition theorem allows us to decompose a max flow of value $|S|$ into path flows (and cycle flows, which we can ignore) that combine to form the max flow. Because there are $|S|$ edges going into $t$ each with weight one, those path flows must have value one and there must be $|S|$ of them. Those $|S|$ path flows each with value one from the root node to $t$ give us $|S|$ paths from the root node to each node in $S$. Because each edge in the network flow instance has capacity at most $k$, each edge in the original network graph must be used at most $k$ times by all the paths.
\end{proof}

Our completion time minimization algorithm is given in Algorithm \ref{timeMinAlgorithm}.
See Figure \ref{exNetworkFlow} for an example of connection transfer paths found by our network flow approach.

\begin{algorithm*}
`   \caption{Our algorithm for minimizing the completion time of distributing a graph state in a network with graph structure $G$, to some subset $S$ of nodes in the network.}\label{timeMinAlgorithm}
    \begin{algorithmic}[1]
        \Procedure{NetworkFlow}{$G$, $S$, $k$, root}
            \State Start with the network graph $G$ and assign all edge weights $k$.
            \State Add a vertex $t$.
            \State Add $|S|$ edges, from every vertex in $S$ to $t$, with weight 1.
            \State Let root be the source node of this network flow instance and let $t$ be the sink node.
            \State \Return the max flow of this network.
        \EndProcedure
        \Procedure{MinimizeCompletionTime}{G, S}
            \ForAll{possible root nodes $v\in V(G)$}
            \State Use binary search on $k$ to find the minimum $k\in\{1,2,\cdots,|S|\}$ such that $\mathrm{NetworkFlow} (G, S, k, v)$ has max flow $|S|$.
            \State Let $k_v$ be this minimum $k$ value.
            \EndFor
            \State Let $\mathrm{root}=\arg\min_{v\in V(G)}k_v$.
            \State Let $k=\min_{v\in V(G)}k_v$.
            \State Use the flow decomposition theorem to extract $|S|$ paths from $\mathrm{NetworkFlow} (G, S, k, \mathrm{root})$.
            \State Use these $|S|$ paths to distribute the graph state among the network from the root node to the nodes in $S$.
        \EndProcedure
    \end{algorithmic}
\end{algorithm*}

Note that selecting connection transfer paths in the network that minimize completion time may result in more EPR pairs consumed than indicated by our previously derived upper bound. This is because the paths found from the network flow problem may not be the shortest paths from the root node to the nodes in $S$, and the EPR pair consumption bound relies on using the shortest paths in the network. Exploring the trade-off between reducing completion time and reducing EPR pair consumption is a subject for future work.

\begin{figure}
	\begin{tikzpicture}[scale=0.3]
	\coordinate (v1) at   (-8, 0);
	\coordinate (v2) at   (-4, 0);
	\coordinate (v3) at   ( 0, 0);
	\coordinate (v4) at   ( 4, 0);
	\coordinate (dots) at ( 8, 0);
	\coordinate (vn) at   ( 12, 0);
	\coordinate (t) at    ( 0, -2);
	
	\node at (v1)[circle,fill,label={$v_1$}]{};
	\node at (v2)[circle,fill,label=$v_2$]{};
	\node at (v3)[circle,fill,label=$v_3$]{};
	\node at (v4)[circle,fill,label=$v_4$]{};
	\node at (dots) {$\cdots$};
	\node at (vn)[circle,fill,label=$v_n$]{};
	\node at (t) [circle,fill,label=below:$t$]{};
	\node at (2, -3.5) [label=below:{(edges from $S$ to $t$ have weight 1)}]{};
	
	\foreach \n in {v2,v3,v4,vn}
	\draw[edgeStyle] (\n) -- (t);
	
	\draw [dashed, line width=2pt] (2, 0.5) ellipse (13 and 1.75);
	\node at (5, 2.1)[label={network graph with edge weights $k$}]{};
	\end{tikzpicture}
	\caption{Let every edge in this graph (which is the network graph, plus $|S|$ edges from every node in $S$ to an extra vertex $t$) have weight $k$, except for the edges to $t$ which have weight 1. Then the max flow from $v_1$ (the root) to $t$ is $|S|$ iff there is set of $|S|$ paths from the root to every node in $S$ such that every edge in the network is used at most $k$ times.}
	\label{networkFlow}
\end{figure}
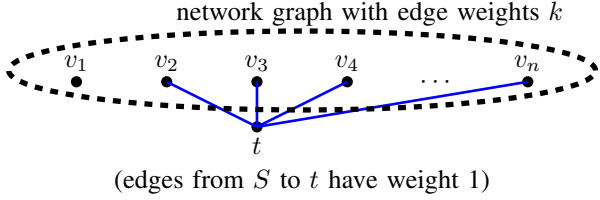

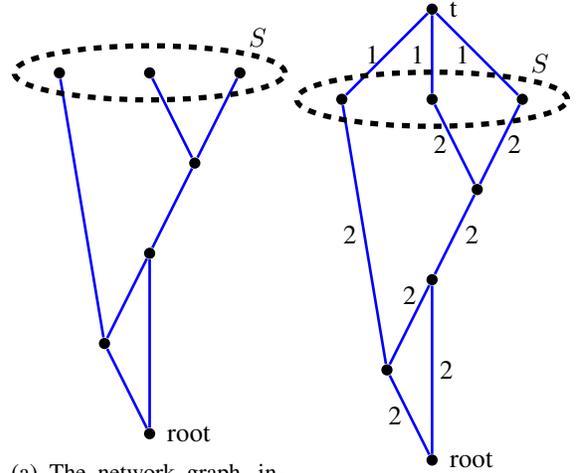
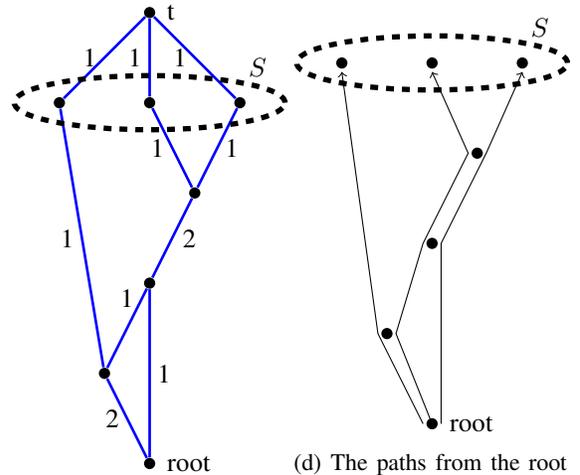
\begin{figure}
	\centering
	\begin{subfigure}{0.22\textwidth}
		\centering
		\begin{tikzpicture}[scale=1.2]
		\node (root) at (0, 0)[circle,fill]{};
		\draw (root) node [right=5pt]{root};
		\node (n1) at (-0.5, 1.0)[circle,fill]{};
		\node (n2) at (0.0,  2.0)[circle,fill]{};
		\node (n3) at (0.5,  3.0)[circle,fill]{};
		
		\node (s1) at (-1, 4)[circle,fill]{};
		\node (s2) at (0,  4)[circle,fill]{};
		\node (s3) at (1,  4)[circle,fill]{};
		
		\draw (root) -- (n1) [edgeStyle];
		\draw (root) -- (n2) [edgeStyle];
		\draw (n1) -- (n2) [edgeStyle];
		\draw (n2) -- (n3) [edgeStyle];
		\draw (n1) -- (s1) [edgeStyle];
		\draw (n3) -- (s2) [edgeStyle];
		\draw (n3) -- (s3) [edgeStyle];
		
		\draw [dashed, line width=2pt] (s2) ellipse (1.5 and 0.3);
		\node at (s3)[label={$S$},above right=5pt]{};
		\end{tikzpicture}
		\caption{The network graph, including the root and $S$.}
	\end{subfigure}
	\begin{subfigure}{0.22\textwidth}
	\centering
		\begin{tikzpicture}[scale=1.2]
		\node (root) at (0, 0)[circle,fill]{};
		\draw (root) node [right=5pt]{root};
		\node (n1) at (-0.5, 1.0)[circle,fill]{};
		\node (n2) at (0.0,  2.0)[circle,fill]{};
		\node (n3) at (0.5,  3.0)[circle,fill]{};
		
		\node (s1) at (-1, 4)[circle,fill]{};
		\node (s2) at (0,  4)[circle,fill]{};
		\node (s3) at (1,  4)[circle,fill]{};
		
		\node (t) at (0, 5)[circle, fill]{};
		\draw (t) node [right=5pt]{t};
		
		\draw (root) -- (n1) [edgeStyle] node [midway,left=1pt] {2};
		\draw (root) -- (n2) [edgeStyle] node [midway,right=1pt] {2};
		\draw (n1) -- (n2) [edgeStyle] node [midway,above=6pt] {2};
		\draw (n2) -- (n3) [edgeStyle] node [midway,right=2pt] {2};
		\draw (n1) -- (s1) [edgeStyle] node [midway,left=1pt] {2};
		\draw (n3) -- (s2) [edgeStyle] node [midway,left=1pt] {2};
		\draw (n3) -- (s3) [edgeStyle] node [midway,right=1pt] {2};
		\draw (s1) -- (t) [edgeStyle] node [midway,left=1pt] {1};
		\draw (s2) -- (t) [edgeStyle] node [midway,left=1pt] {1};
		\draw (s3) -- (t) [edgeStyle] node [midway,left=1pt] {1};
		
		\draw [dashed, line width=2pt] (s2) ellipse (1.5 and 0.3);
		\node at (s3)[label={$S$},above right=5pt]{};
		\end{tikzpicture}
		\caption{The network flow problem instance with $k=2$.}
	\end{subfigure}
	\begin{subfigure}{0.22\textwidth}
		\centering
		\begin{tikzpicture}[scale=1.2]
		\node (root) at (0, 0)[circle,fill]{};
		\draw (root) node [right=5pt]{root};
		\node (n1) at (-0.5, 1.0)[circle,fill]{};
		\node (n2) at (0.0,  2.0)[circle,fill]{};
		\node (n3) at (0.5,  3.0)[circle,fill]{};
		
		\node (s1) at (-1, 4)[circle,fill]{};
		\node (s2) at (0,  4)[circle,fill]{};
		\node (s3) at (1,  4)[circle,fill]{};
		
		\node (t) at (0, 5)[circle, fill]{};
		\draw (t) node [right=5pt]{t};
		
		\draw (root) -- (n1) [edgeStyle] node [midway,left=1pt] {2};
		\draw (root) -- (n2) [edgeStyle] node [midway,right=1pt] {1};
		\draw (n1) -- (n2) [edgeStyle] node [midway,above=6pt] {1};
		\draw (n2) -- (n3) [edgeStyle] node [midway,right=2pt] {2};
		\draw (n1) -- (s1) [edgeStyle] node [midway,left=1pt] {1};
		\draw (n3) -- (s2) [edgeStyle] node [midway,left=1pt] {1};
		\draw (n3) -- (s3) [edgeStyle] node [midway,right=1pt] {1};
		\draw (s1) -- (t) [edgeStyle] node [midway,left=1pt] {1};
		\draw (s2) -- (t) [edgeStyle] node [midway,left=1pt] {1};
		\draw (s3) -- (t) [edgeStyle] node [midway,left=1pt] {1};
		
		\draw [dashed, line width=2pt] (s2) ellipse (1.5 and 0.3);
		\node at (s3)[label={$S$},above right=5pt]{};
		\end{tikzpicture}
		\caption{A maximum flow from the root to $t$, with value $3=|S|$. All flow directions are upward.}
	\end{subfigure}
	\begin{subfigure}{0.22\textwidth}
	\centering
		\begin{tikzpicture}[scale=1.2]
		\node (root) at (0, 0)[circle,fill]{};
		\draw (root) node [right=5pt]{root};
		\node (n1) at (-0.5, 1.0)[circle,fill]{};
		\node (n2) at (0.0,  2.0)[circle,fill]{};
		\node (n3) at (0.5,  3.0)[circle,fill]{};
		
		\node (s1) at (-1, 4)[circle,fill]{};
		\node (s2) at (0,  4)[circle,fill]{};
		\node (s3) at (1,  4)[circle,fill]{};
		
		\draw [->] (-0.1, 0) -- (-0.6, 1) -- (-1, 3.9);
		\draw [->] (0.0,  0) -- (-0.4, 1) -- (-0.1, 2) -- (0.4, 3) -- (0, 3.9);
		\draw [->] (0.1,  0) -- (0.1, 2) -- (0.6, 3) -- (1, 3.9);
		
		\draw [dashed, line width=2pt] (s2) ellipse (1.5 and 0.3);
		\node at (s3)[label={$S$},above right=5pt]{};
		\end{tikzpicture}
		\caption{The paths from the root to $S$ found by the flow decomposition theorem. Each edge is used at most $k=2$ times.}
	\end{subfigure}
	\caption{An example of connection transfer paths found by our network flow approach.}
	\label{exNetworkFlow}
\end{figure}

\subsection{An Example: Full Binary Tree}

In Section \ref{eprExampleTree}, we computed the number of EPR pairs consumed when distributing a graph state among every node of a network with a full binary tree structure. If we use the same root node (the root of the tree) and paths as we introduced in Section \ref{eprExampleTree}, then the two edges connected to the root node will each be used $\frac{n-1}{2}$ times, and every other edge in the network will be used fewer times. Thus the completion time for the GST algorithm to distribute a graph state to every node of that network is $\frac{n-1}{2}$ timesteps. This is an improvement over the $n-1$ timesteps required by the EDCG algorithm.

\section{Classical Communication Requirements}

Both graphical and teleportation based graph state distribution require $O(n^2)$ bits of classical communication to distribute a graph state across a network of size $n$. The classical communication requirement comes from communicating measurement results so nodes can perform the appropriate local correction operations.


For the graphical approach---if, for each qubit whose connections we transfer to its destination node, we reorder the edge addition, $Y$-measurement, and local correction operations such that the local corrections come last\cite{measurementCalc} (see Section \ref{timeOptSec} for details on this process), then we send $O\left(n^2\right)$ classical bits for measurement results. This is because each time we transfer the connections of a qubit to its destination node, we can transmit all $O(n)$ measurement results to the root node, which will then transmit all $O(n)$ correction operation requirements (which require $O(1)$ classical communication each) to the nodes which require local correction. Thus we require $O(n)$ classical communication for each qubit whose connections we transfer to a destination node, or $O\left(n^2\right)$ communication total.

For the teleportation approach---when transferring any qubit's connections to its destination node, the local correction operation at the destination node depends on the measurement results of all of the $O(n)$ measurements done at each node along the path in the network. Hence each node that shares the final graph state requires $O(n)$ qubits of classical communication, for a total of $O(n^2)$ bits of classical communication.

The EDCG algorithm for graph state distribution also requires $O(n^2)$ classical communication. A star expansion operation on a qubit with $m$ neighbors requires $O(m)$ bits of classical communication. Distributing a GHZ state across a graph with $n$ vertices requires that each node only be communicated with once, so only $O(n)$ bits of communication are required. The EDCG algorithm requires distributing a GHZ state $n$ times, so $O(n^2)$ bits of classical communication are required. Also, performing edge measurements and subsequent local corrections to turn the EDCG state into the desired graph state requires $O(1)$ bits for each edge, for $O(n^2)$ bits total.

\begin{table*}[h]
    \begin{center}
	\begin{tabular}{|l|l|l|l|l|l|l|}
		\hline
		& EPR pairs  & Time                & \thead{Res. Graph \\ State Qubits}    & \thead{Classical \\ Comm.} & \thead{Bin. Tree \\ EPR Pairs} & \thead{Bin. Tree \\ Completion Time} \\\hline
		\thead{GST \\ algorithm} & $\frac{3n^2-2n}{8}$ & $n-1$ & $2(n-1)$                      & $O\left(n^2\right)$ & \thead{$2^{h+1}(h-1) + 2$ \\ $\hspace*{.3in} \mbox{} =\Theta(n\log n)$} & $\frac{n-1}{2}$ \\\hline
		\thead{EDCG \\ algorithm}   & $\frac{n(n-1)}{2}$  & $n-1$ & $\frac{n(n+1)}{2}$            & $O\left(n^2\right)$ & $\frac{n(n-1)}{2}$ & $n-1$ \\\hline
	\end{tabular}
	\caption{Comparison of various performance metrics between the GST algorithm and the EDCG algorithm. We also include a comparison of how both algorithms perform on the binary tree of height $h$ example given in Section \ref{eprExampleTree}.}
	\label{comparisonTable}
	\end{center}
\end{table*}

Note that both of our connection transfer approaches result in $O(1)$ bits of classical communication required for each measurement done---equivalently, $O(1)$ bits of classical communication for each EPR pair consumed. Also, the EDCG algorithm requires $\Omega(1)$ bits of classical communication per EPR pair consumed as that also requires communicating measurement results of each measurement that consumes an EPR pair. Thus Theorem \ref{eprTheorem} from Section \ref{eprSection} also extends from EPR pair consumption to bits of classical communication---if the EDCG algorithm for graph state distribution uses $c(n)$ bits of classical communication to distribute a graph state in a network of size $n$, then the GST algorithm uses $O(c(n))$ bits.

Table \ref{comparisonTable} summarizes all the cost metrics of our graph state distribution algorithm compared to the EDCG algorithm\cite{distrStates}.

\section*{Acknowledgments}
We would like to acknowledge Professor Neil Immerman for coming up with the clever network flow construction. This work was supported in part by the National Science Foundation (NSF) under Grants CNS-195744 and ERC-1941583.

\bibliography{QCE}{}
\bibliographystyle{plain}
\end{document}